\definecolor{dullmagenta}{rgb}{0.4,0,0.4}   
\definecolor{darkblue}{rgb}{0,0,0.4}
\titleformat*{\section}{\bfseries}
\titleformat*{\subsection}{\normalsize\bfseries}
\titleformat*{\subsubsection}{\bfseries}
\titleformat*{\paragraph}{\large\bfseries}
\titleformat*{\subparagraph}{\large\bfseries}
\titlespacing\section{0pt}{12pt plus 4pt minus 2pt}{2pt plus 2pt minus 2pt}
\pgfplotsset{width=7cm,compat=newest}
\newcommand{\nc}{\newcommand}
\nc{\ketbra}[1]{|#1\rangle\langle #1|}
\nc{\kb}[1]{\vert#1\rangle\!\langle#1\vert}
\def\tr{\operatorname{Tr}}
\def\id{{\mathbbm{1}}}
\def\phi{\varphi}
\nc{\dd}[1]{\!\!{\rm d} #1\,}
\nc{\numberthis}{\addtocounter{equation}{1}\tag{\theequation}}
\def\channel{\ensuremath{\mathcal{E}_{k,N_0}}}
\nc{\todo}[1]{{\color{red} ToDo: #1}}
\newtheorem{lemma}{Lemma}
\begin{document}

\title{\large {\bf Coherent-state constellations and polar codes for thermal Gaussian channels}}
\author{
{\normalsize Felipe Lacerda$^{1,2}$, Joseph M.\ Renes$^2$, and Volkher B.\ Scholz$^{2,3}$}\\
\emph{\normalsize $^1$Department of Computer Science, Aarhus University, 8200 Aarhus N, Denmark}\\
\emph{\normalsize $^2$Institute for Theoretical Physics, ETH Z\"urich, 8093 Z\"urich, Switzerland}\\
\emph{\normalsize $^3$Department of Physics, Ghent University, 9000 Gent, Belgium}
}

\date{\vspace{-\baselineskip}}

\maketitle

\begin{abstract}
Optical communication channels are ultimately quantum-mechanical in nature, and we must therefore look beyond classical information theory to determine their communication capacity as well as to find efficient encoding and decoding schemes of the highest rates. 
Thermal channels, which arise from linear coupling of the field to a thermal environment, are of particular practical relevance; 
their classical capacity has been recently established, but their quantum capacity remains unknown. 
While the capacity sets the ultimate limit on reliable communication rates, it does not promise that such rates are achievable by practical means. 
Here we construct efficiently encodable codes for thermal channels which achieve the classical capacity and the so-called Gaussian coherent information for transmission of classical and quantum information, respectively. 
Our codes are based on combining polar codes with a discretization of the channel input into a finite ``constellation'' of coherent states. Encoding of classical information can be done using linear optics. 
\end{abstract}

\begin{multicols}{2}

\section{Introduction}

Optical communication channels such as glass fibers are the workhorses of state-of-the-art communication networks, and they are also of particular importance in quantum communication theory. First, in our quest to squeeze ever more data through existing communication infrastructure, we are gradually reaching the realm of quantum effects and hence need to consider their influence on data transmission. Second, optical channels are the most promising candidate to establish quantum links between distant locations and possibly a quantum internet. Hence developing communication schemes for such noisy quantum channels is of central importance in quantum information.   

The quantum effects of noise in such electromagnetic systems are well modeled by linear coupling of the field modes to additional Bosonic fields by quadratic Hamiltonians. 
This leads to the class of Gaussian channels, whose action can be described by linear operations in phase space~\cite{holevo_evaluating_2001,cerf_gaussian_2007}.
Mixing of a single mode with thermal noise is a particularly relevant Gaussian channel, called the thermal channel.
The ultimate capacity for transmitting classical information over thermal channels, and indeed any phase-insensitive channel, has been recently established in~\cite{giovannetti_ultimate_2014}. 
Less is rigorously established about their quantum capacity, but it is widely believed to be given by the regularized coherent information~\cite{cerf_gaussian_2007}. 
Restricting attention to Gaussian inputs, the most practically relevant case of states defined by their first and second moments, \cite{holevo_evaluating_2001} showed that thermal states with ever larger mean photon number optimize the coherent information. 
For degradable channels, e.g.\ pure loss channels in which the thermal noise of the channel is at zero temperature, Gaussian inputs to the coherent information are in fact optimal, that is, they maximize the
coherent information~\cite{wolf_quantum_2007}.

While capacity is an important property, it does not address the practical limitations of high-rate communication, such as the efficiency of encoding and decoding operations, or their implementability using linear optics.   
In this article we construct explicit codes that achieve the classical capacity as well as (likewise explicit) quantum codes that achieve the Gaussian coherent information of thermal channels. 
On the way to the latter, we construct codes for private information transmission that also achieve the coherent information.
All codes have efficient encoding operations and explicit decoders, though their decoding efficiency is unknown.
In the case of transmitting classical information, private or not, the encoding operations require only the generation of coherent states. 
Superpositions of coherent states are used for transmission of quantum information. 

Our code constructions are based on \emph{discretizing} the optimal channel input to a finite ``constellation,'' and then using a \emph{polar code} on the induced channel. 
Both are concepts originating in classical information theory. 
Finite constellations of input signals have always been used for continuous-input classical channels, e.g.\ the additive white Gaussian noise channel (AWGN), and several particular constellations are known to achieve the AWGN capacity (see~\cite{wu_impact_2010}). 
Meanwhile, polar codes are a recent breakthrough, the first explicitly constructed codes with efficient encoding and decoding that achieve the classical capacity of discrete-input channels~\cite{arikan_channel_2009}. 
Polar coding was adapted for classical communication over finite-dimensional quantum channels by Wilde and Guha~\cite{wilde_polar_2013} and for quantum communication by one of the authors~\cite{renes_efficient_2012}. 

We construct constellations for the thermal channel from AWGN constellations.
The thermal input state can be viewed as a Gaussian-weighted mixture of coherent states, i.e.\ its Glauber-Sudarshan $P$ function is a two-dimensional Gaussian.  
The optimal AWGN input is a one-dimensional Gaussian distribution, and we can therefore use AWGN constellations for the real and imaginary parts to give a finite constellation of coherent states. 
The difficulty is to show that the constellation achieves essentially the same rate as does the thermal input. 
After all, the discretized input only resembles the ideal Gaussian input in a very weak sense (specifically, in their lower-order moments), and it is not immediate that these limited similarities are enough to ensure similar rates. 
Luckily, Wu and Verd\'u~\cite{wu_impact_2010} have recently established precisely this result for the AWGN, and we adapt this to the quantum case. 
Not having to very strictly emulate the optimal input state frees us considerably in designing coding protocols and should have application to other settings, such as two-way protocols or quantum repeaters.


\begin{figure*}
\centering
\includegraphics{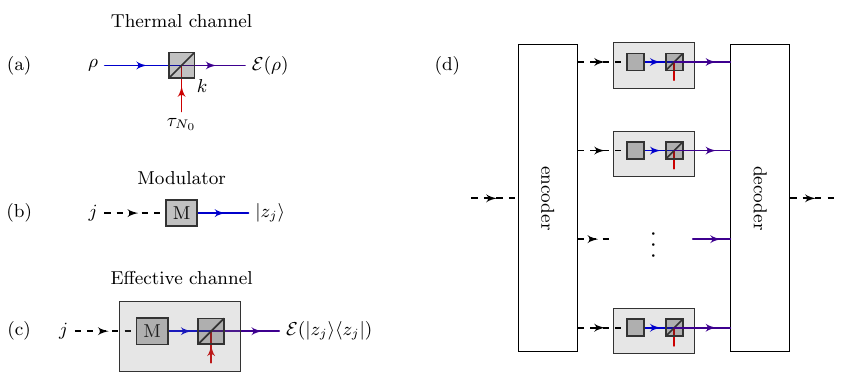}
\caption{\label{fig:channels} (a) The thermal channel $\mathcal E$ mixes the input state $\rho$ with the thermal state of mean photon number $N_0$ on a beamsplitter of transmittivity $k$. (b) The modulator taking classical input $j$ to the $j$th coherent state $\ket{z_j}$ in the constellation. (c) The effective channel resulting from combining the modulator with the thermal channel. 
(d) The classical coding scheme using the effective channel. In the quantum scheme the encoder accepts a quantum state and the modulator must be able to output superpositions of constellation coherent states.}
\end{figure*}

\section{Thermal channels}
\label{sec:bosonic-channels}

We consider single-mode thermal-noise
channels~\cite{giovannetti_ultimate_2014}, denoted $\channel$, which describes an input mode interacting with a thermal state with mean photon number $N_0$ at a beamsplitter of transmittivity $k<1$. 
The channel action can be described by the following transformation of the annihilation operators $a$ and $b$ of the input (signal) and auxiliary (ancilla) modes, respectively:
\begin{subequations}
\label{eq:channeldef}
\begin{align}
  \label{eq:thermal-1} a &\to a'=k a + \sqrt{1-k^2} b, \\
  \label{eq:thermal-2} b &\to b'=k b - \sqrt{1-k^2} a,
\end{align}
\end{subequations}
where the ancilla mode is initially in the thermal state with mean photon number $N_0$. 
Equivalently, defining $N_c=(1-k^2)N_0$, $\channel$ describes the composition of an attenuation channel ($N_0=0$ in $\channel$) with coefficient $0\leq k\leq 1$ and a channel adding Gaussian-distributed noise to each quadrature with identical variance $\sqrt{N_c}$~\cite{holevo_one-mode_2007}. 
Indeed the case $k=1$ requires the latter formulation; see~\cite[Sec.\ 4]{holevo_one-mode_2007}.

For a fixed maximum input mean photon
number $N$, the classical capacity $C(N)$ of the thermal-noise channel has been
shown~\cite{giovannetti_ultimate_2014} to be the Holevo information of the channel, evaluated for
a Gaussian ensemble of mean photon number $N$ of coherent states.

More precisely, we have the following. 
Let $Z$ be a random variable over $\mathbb C$ with probability density $P_Z(z)=\psi_N(z)$ for 
\begin{align}
\label{eq:Zdistrib}
\psi_N(z):=(\pi N)^{-1} \exp(-|z|^2/N),
\end{align}
$\ket{z}$ the coherent state centered at $z\in \mathbb C$, and $\theta_z^B=\channel^{A\to B}(\ketbra{z}^A)$. Here $A$ denotes the input mode and $B$ the output mode. 
We denote by $E$ the output ancilla mode, so that the joint output produced by the channel is the bipartite state $\theta_z^{BE}$. 
The capacity can then be expressed as~\cite{giovannetti_ultimate_2014}
\begin{equation}
  \label{eq:classical-capacity}
  C(N) =  I(Z:B)_\rho,
\end{equation}
where $\rho^{ZB}$ is the classical-quantum state corresponding to the ensemble $\{P_Z(z),\theta_z^B\}_{z\in \mathbb C}$, 
and $I(Z:B)_\rho$ is defined as
\begin{align}
I(Z:B)_\rho=H(\rho^B)-\int_{\mathbb C} \text{d}z\, P_Z(z) H(\theta_z^B),
\end{align}
with $\rho^B=\int_{\mathbb C} \text{d}z\, P_Z(z)\theta_z^B$.
Here $H(\rho)$ is the von Neumann entropy, $H(\rho)=-\tr[\rho\log\rho]$, and we use the natural logarithm throughout.  
The marginal state of the input is then just the thermal state $\tau_N$ having average photon number $N$: $\int_{\mathbb C}\text{d}z\, P_Z(z)\kb z=\tau_N$, with 
\begin{equation}
  \label{eq:thermal-state}
  \tau_N := \frac{1}{N+1} \sum_{n=0}^{\infty}
  \left(\frac{N}{N+1}\right)^n \kb{n},
\end{equation}
where $\{\ket{n}\}_{n=0}^{\infty}$ is the number basis. 
Observe that $P_Z$ is the $P$ function of the input state $\tau_N$, while for given $z$ the above description of the channel implies that the $P$ function of the output $\theta_z^B$ in $B$ is simply $P_{\theta_z^B}(w)=\psi_{N_c}(w-kz)$. 
Similarly, the output $\theta_z^{E}$ in the auxiliary mode has $P$ function $P_{\theta^E_z}(w)=\psi_{k^2N_0}(w+\sqrt{1-k^2}z)$.

In establishing the formula for the classical capacity, \cite{giovannetti_ultimate_2014} relies on random coding arguments. For the pure loss case ($N_0=0$), Guha and Wilde  constructed polar codes for a constellation of two coherent states, known as binary phase shift keying (BPSK)~\cite{guha_polar_2012}, whose optimal rate approaches the capacity in the limit of vanishing input mean photon number. \\

Compared to the case of classical data transmission, less is known about the best rate to transmit \textit{quantum data}, which is set by the \textit{quantum} capacity of the thermal noise channel (we refer the reader to the book~\cite{wilde_quantum_2013} for the precise definitions and an introduction into these concepts). This is related to the fact that we currently do only have a limited understanding of quantum coding in the infinite-dimensional setting, especially when taking physical limitations such as precision restrictions into account. For example, Devetak's proof~\cite{devetak_private_2005} that the coherent information and its regularization are achievable rates of quantum communication does not deal with infinite-dimensional channels. Although it seems straightforward to extend his method by suitably truncating the input and output spaces (as carried out for entanglement distillation in~\cite[Appendix I]{pirandola_fundamental_2017}), this proof technique would lead to random codes involving superpositions of products of number states. These encodings are argueably very hard to realize experimentally. 

It is thus interesting to determine possible rates of noiseless communication if we restrict to \textit{Gaussian} encodings. Adding another simplification, namely, disregarding the fact that the coherent information needs to be regularized in order to give the ultimate quantum capacity of a channel, leads to single-letter Gaussian quantum capacity $Q_G^{(1)}$. This quantity is given by maximizing the coherent information over Gaussian input states,
\begin{align}
  \label{eq:coherentinfo}
  Q_G^{(1)} = \max_{\rho} H(B)_\omega-H(E)_\omega\,,
\end{align}
where $\omega^{ABE} = V_{\channel}^{A' \to BR}(\xi^{AA'})$ for $\xi^{AA'}$ a purification of the Gaussian state $\rho^A$ at the channel input and $V$ is a Stinespring dilation of the channel. 
In~\cite{holevo_evaluating_2001}, Holevo and Werner showed that this optimization problem can be solved explicitely in the case of the thermal noise channel, and that the unique maximizer is the state $\tau_N$ for $N\to \infty$. 
The corresponding output is given by
$\tau_{N'}$ where $N' = k^2 N + N_c$. 
They also showed that $Q_G^{(1)}$ remains finite in this limit, that is, in the limit of infinite energy. 
This is in sharp constrast to the classical case, where the capacity is infinite in the absence of an energy bound. 
The difference between the classical and the quantum case may be argued for by the fact that the quantum uncertainty principle prevents us from arbitrary dense packing of quantum information, but a satisfactory analytic understanding of this point is still missing. 
In addition, a finite-energy bound is of course also of practical interest, even in the quantum-mechanical case, due to implementation limitations.

Nevertheless, the quantity~\eqref{eq:coherentinfo} is a lower bound on the quantum capacity, because it is restricted to Gaussian inputs which are moreover not entangled in the case of multiple channel uses. Hence, the construction of communication schemes which achieve this rate is a necesssary first step for advancing our understanding of the limitations of quantum communication via the thermal noise channel. Moreover, in the case of degradable channels such as pure loss, Wolf \emph{et al.}\ have shown that Gaussian inputs are provably optimal among all inputs~\cite{wolf_quantum_2007} (and moreover entangled inputs are not necessary~\cite{devetak_capacity_2005}). 

The first communication schemes which achieve the rate~\eqref{eq:coherentinfo} were constructed by Harrington and Preskill for the case $k=1$ in~\cite{harrington_achievable_2001}. Their construction is not based on discretization, but rather embedding an appropriate number of qubits directly into the state space of a larger number of modes, using the method of~\cite{gottesman_encoding_2001}. The practicality of this method is, however, limited, even disregarding the non-explicit nature of the code, as it ostensibly requires codewords which are superpositions of highly squeezed states.

Here, we construct new encoding schemes which are based on coherent states and are thus more feasible experimentally. As already explained in the introduction, our results are based on constellations, or discretizations of continuous input distributions.

\section{New coding schemes}
Any given constellation defines a mapping, or modulation, from a discrete set of $m$ inputs (indexing the particular input state) to the coherent states in the constellation. 
Combining this mapping with the thermal channel then defines an ``effective'' discrete-input Bosonic-output channel, as depicted in Figure~\ref{fig:channels}.
The methods of polar coding can then be applied to this channel to yield a high-rate block code. 
As we show in the two subsequent sections, this results in classical codes with rates achieving the capacity $C(N)$ and quantum codes with rates achieving the Gaussian coherent information $Q^{(1)}_G$. 
The use of polar codes also ensures the encoding operation is efficient, though it is not known if efficient decoding is possible when the channel outputs noncommuting states.

In classical information theory, channel constellations are designed to emulate the
optimizer of the channel mutual information, as this gives a means for showing
that the overall coding scheme can approach the capacity. 
Here we take the same approach, and aim to emulate $\tau_N$, the optimizer in both the Holevo information and coherent information. 

As depicted in Figure~\ref{fig:constellations}, several useful constellations are known for the AWGN. 
The equilattice is simply $m$ equally spaced points with equal probability whose variance matches that of the optimizer of the channel mutual information~\cite{ungerboeck_channel_1982}. The quantile constellation chooses points based on the quantile (inverse of the cumulative distribution)~\cite{sun_approaching_1993}, while the random walk is precisely the distribution of positions of a suitably rescaled random walk of length $(m-1)$. 
Finally, the Gauss-Hermite constellation is based on Gauss-Hermite quadrature. 
This discretizes the Gaussian distribution to $m$ points in $\mathbb R$ such that the first $2m-1$ moments of the Gaussian are correctly reproduced. 
Indeed, Gauss-Hermite quadrature is optimal in that it reproduces the largest number of moments for fixed $m$. 
This constellation and the random walk were introduced in~\cite{wu_impact_2010}. 

As noted in~\cite{wu_impact_2010}, the equilattice has a minimal gap to capacity of roughly $0.25$ bits, while the gap closes in the $m\to \infty$ limit for the quantile, random walk, and Gauss-Hermite constellations. The latter closes exponentially in $m$ in the asymptotic limit, but only polynomially in the other two cases. 
Nonetheless, the capacity of the random-walk constellation is already quite close to the optimal value even for modest $m$, while the Gauss-Hermite constellation only becomes optimal for larger $m$. 
We observe the same behavior for the thermal channel, as depicted in Figure~\ref{fig:rates}.

\begin{figure*}
\centering
\includegraphics{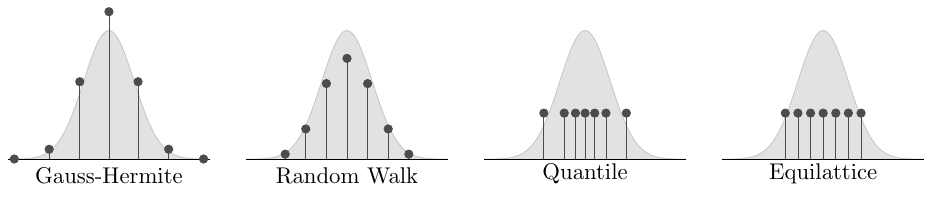}
\caption{\label{fig:constellations}
Four constellations of $m=7$ points which approximate a normally distributed random variable. 
The random-walk and equilattice constellations consist of equally spaced points, while the quantile and equilattice have equal probabilities. Gauss-Hermite quadrature  has neither, but precisely reproduces the largest number of moments of the Gaussian distribution, the first $2m-1$. The others have only the same mean and variance. 
}
\end{figure*}

\begin{figure*}
\centering
\includegraphics{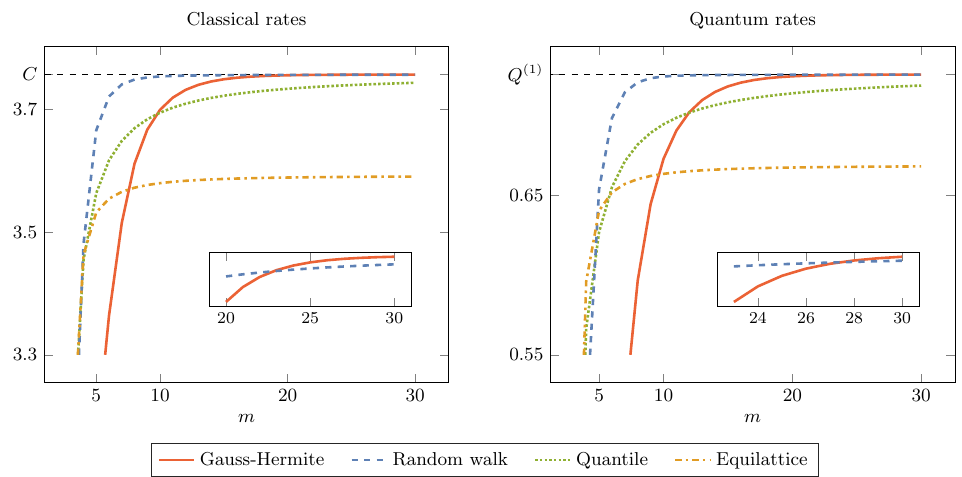}
\caption{\label{fig:rates} Achievable rates for classical and quantum information transmission with increasing constellation size $m$, for the pure loss channel ($N_0=0$) with $k=0.8$ and input state $\tau_N$ with $N=7$. Although the Gauss-Hermite constellation (straight line) is optimal as $m\to \infty$, it is inferior for small $m$. For practical purposes, the random-walk constellation (dashed line) is best, rising very quickly to rates quite close to capacity.}
\end{figure*}

\section{Coherent state constellations}
\label{sec:approaching-coherent}

For the thermal channel we can apply any of these constellations in phase space, to the $P$ function of $\tau_N$. 
As described above, this is a circularly symmetric complex Gaussian, and we use the AWGN constellations to mimic the real and imaginary parts separately.  
Specifically, the constellation is described by the distribution $Q_{N,m}(z)$, supported on $m^2$ points, such that 
\begin{align}
\label{eq:Qmdef}
\tfrac N2Q_{N,m}(\sqrt{\tfrac N2}(x+iy))=P_{X_m}(x)P_{X_m}(y),
\end{align} 
where $P_{X_m}$ is one of the four constellations considered
in~\cite{wu_impact_2010}. In terms of random variables,
$Z_m=\sqrt{\tfrac N2}(X_m+i X'_m)$, where $X_m$ and $X_m'$ are independent
realizations of the given constellation.  The factor $\sqrt{N}$ ensures that the
resulting $P$ function has variance $N$, while the $1/\sqrt{2}$ factor takes
care of the conversion from two real to one complex variable.

Define the associated ensemble
\begin{align}
\label{eq:modulatorinput}
\rho_{N,m}^{ZA}=\{Q_{N,m}(z),\kb z\}_{z\in \mathbb C}.
\end{align}
As for the classical Gaussian channel, for $m \geq 2$ the first two moments of
$\rho^A_{N,m}$ match those of the thermal state $\tau_N$.

Now we sketch how to upper bound the gap between the coherent information and
the rate of the polar code applied to the induced discretized channel.  A
simpler version of the same argument yields the analogous result for the
classical capacity.

Let us denote the target rate without using modulation as $R$, and the
associated rate using the modulation with $m$ points $R_m$. The former is the
coherent information of the channel, while the latter is the same entropic
expression, but evaluated for the input state given by the discretization
scheme.  
The coherent information may be written as $R=I(Z:B)-I(Z:E)$ for $P_Z$ as in \eqref{eq:Zdistrib}, and
$R_m$ is the same expression, evaluated with $Z\sim Q_{N,m}$.
Here $E$ is the output of the channel to the environment. 
This follows because $\theta_z^{BE}$, the state of $BE$ given the input $Z=z$ (or $Z_m=z$), is pure and hence $H(B|Z)+B(E|Z)=0$. 

We would like to find an upper bound on $\Delta=R-R_m$, which can be written as $\Delta=\Delta_B-\Delta_E$ for $\Delta_B=I(Z:B)-I(Z_m:B_m)$ and similarly for $\Delta_E$.  
Here, $B_m$ is the channel output for the input $Z_m$. 
Clearly, we need not consider $\Delta_E$ for the problem of classical information transmission. 
These quantities can be written as relative entropies. 
More specifically, 
let $\rho^{ZBE}$ and $\rho_{m}^{Z_m B_m E_m}$ be ensembles of the state
  $\theta_z^{BE}$ with distributions $\psi_N$ as in \eqref{eq:Zdistrib} and $Q_{N,m}$ as in \eqref{eq:Qmdef}, respectively.
  Then 
\begin{subequations}
\begin{align}
\Delta_B &:= I(Z : B) - I(Z_m :B_m)= D(\rho^{B_{m}}_m \| \rho^B),\label{eq:DeltaB}\\
\Delta_E &:= I(Z:E) - I(Z_m : E_m)=D(\rho^{E_m}_m \| \rho^E),
\end{align}
\end{subequations}
where the relative entropy is defined as $D(\rho\|\sigma)=\tr[\rho(\log\rho-\log\sigma)]$.

To see this, consider the claim for $\Delta_B$; the following argument will also work for $\Delta_E$.  
Expanding out the mutual information, we obtain $\Delta_B = H(B) - H(B \vert Z) - H(B_m) + H(B_m \vert Z_m)$. 
Since $P$ functions of the output states $\theta_z^B$ are all identical up to translation, it follows that their entropies are also identical, and therefore $H(B \vert Z) = H(B_m \vert Z_m)$.
This leaves $\Delta_B=H(B)-H(B_m)$. 
Using the form of the relative entropy, it is apparent that the claim is equivalent to the statement $\tr[ \rho_m^{B_m} \log \rho^B] = \tr
  [\rho^B \log \rho^B]$.
Since the channel is Gaussian, $\rho^B$ 
is a zero-mean Gaussian state, equivalent up to symplectic unitary conjugation to a tensor product of thermal states~\cite{holevo_evaluating_2001,cerf_gaussian_2007}. 
It follows that $\log \rho^B$ is a second-order polynomial $q(r,r^{\dagger})$ of the output creation and annihilation operators $r$ and $r^\dagger$. 
We thus have
\begin{subequations}
  \begin{align}
    \tr [\rho^B \log \rho^B] 
    &= \tr [\channel^{A\to B}(\rho^A) q(r,r^{\dagger}) ]\\
    &= \tr [\rho^A \channel^{A\to B\dagger}(q(r,r^{\dagger})) ]\\
    &= \tr [\rho^A p(a,a^{\dagger})],
  \end{align}
  \end{subequations}
  where $\channel^\dagger$ is the channel adjoint and, since the channel is Gaussian, $p(a,a^{\dagger})$ is a second-degree polynomial in the input creation and annihilation operators. 
  Similarly, we have $\tr [\rho_m^{B_m} \log \rho^B] = \tr[\rho_m^A p(a,a^{\dagger})]$. 
  Finally, since the first two moments of $\rho_m^A$ match those of $\rho^A$, this also holds for the outputs $\rho_m^B$ and $\rho^B$, and we indeed have $\tr[ \rho_m^{B_m} \log \rho^B] = \tr
  [\rho^B \log \rho^B]$. 


As the relative entropy is non-negative, we immediately have $\Delta_E \geq
0$, and we need only an upper bound for $\Delta_B$. 
A convenient choice is the $\chi^2$ distance, given by $\chi^2(\rho, \sigma) = \tr
\left[(\rho \sigma^{-1/2})^2 \right] - 1$, since $D(\rho \| \sigma) \leq
\chi^2(\rho, \sigma)$~\cite{ruskai_convexity_1990,temme_2-divergence_2010}.

In~\cite{wu_impact_2010} Wu and Verd\'u obtain a bound on the $\chi^2$ distance
between the optimal output state and the discretized version for the case of the
AWGN. They relate $\chi^2(P_{Y_m},P_Y)$ directly to the
moments of $X_m$, specifically the moments of the Hermite polynomials of $X_m$.
(These all vanish for standard normal $X$, save the zeroth-order polynomial.)
Indeed, they show the following slightly more general statement~\cite[\S V.A]{wu_impact_2010}. 
First denote the AWGN with signal-to-noise ratio $s$ by $W_s$; it has the action
$W_s(X)=\sqrt{s}X+G$, where $G$ is a normally distributed random variable with unit variance. 
Then for $X\sim \mathsf{N}(0,1)$, $X'$ an arbitrary random variable with density $P_{X'}$, and $Y=W_s(X)$ and $Y'=W_s(X')$, 
\begin{align}
\label{eq:wuverduchi}
1+\chi^2(P_{Y'},P_Y)= \sum_{k=0}^\infty \frac1{k!}\left(\frac s{1+s}\right)^k \left|\mathbb E[H_k(X')]\right|^2.
\end{align}

We can use their result to bound $\chi^2(\rho_m^{B_m},\rho^B)$. 
In particular, with $Q_{N,m}(\sqrt{\frac N2}(x+iy))$ as in \eqref{eq:Qmdef}, $s= k^2 N/(\sqrt{N'(N'+1)}-k^2N)$, and $Y_m=W_s(X_m)$,
\begin{align}
\label{eq:DeltaBbound}
\Delta_B\leq (1+\chi^2(P_{Y_m},P_Y))^2-1.
\end{align}
This bound is a consequence of the $\chi^2$ upper bound on the relative entropy described above, and finding tractable expressions for quantum and classical $\chi^2$ quantities relies on the fact that the second argument to the relative entropy is a thermal state or a Gaussian probability distribution, respectively. 
The inequality \eqref{eq:DeltaBbound} is an instance of a more general statement relating the quantum and classical $\chi^2$ quantities in this setting for general factorizable $P$ functions. 
The proof involves somewhat intricate Gaussian integration and is given in the Appendix.

Hence, per~\cite[\S VII]{wu_impact_2010}, using the quantile or random-walk
constellations will lead to $\Delta_B$ decaying as the inverse of the number of
points in the coherent-state constellation $m^2$, while the Gauss-Hermite
quadrature leads to $\Delta_B\leq O(e^{-c m})$ for $c=2\log \frac{1+s}{s}$. 
Using the above expression for $s$, one obtains $c\approx 2\frac{1-k^2}{k^2}\frac{N_0}{N}$ for $N\gg N_0$.
If we wish to increase $N$ but fix the gap to capacity or coherent information, this implies that the number of constellation points must scale linearly with $N$. 

\section{Polar codes for the thermal channel}
\label{sec:polar}

The polar code construction can be used for classical, private classical, and
quantum coding. Here we provide only the essential details for the present case and will not attempt to provide a background on polar coding. 
One could also appeal to other capacity-achieving schemes for the task of transmitting classical information, e.g.\ spatially-coupled low density parity check (LDPC) codes~\cite{kudekar_spatially_2013} (after showing these work for channels with quantum output), but no explicit quantum coding schemes besides polar codes are known to achieve the coherent information. 
Common to all our coding scenarios is a truncation of the \emph{output} space to finite dimensions, as described at the end of \S\ref{sec:bosonic-channels}.
This ensures that we may apply existing results on the construction and properties of polar codes. {It appears that all the necessary statements also hold for channels with infinite-dimensional outputs, but we have not shown this definitively.}

\subsection{Classical coding}

Wilde and Guha~\cite{wilde_polar_2013} show how to construct polar codes to transmit classical information over channels with binary classical input and finite-dimensional quantum output. 
Recently, one of the present authors extended the construction to arbitrary input alphabets~\cite{nasser_polar_2017}.
Here, the modulator selects a coherent-state input for the Bosonic Gaussian channel, but the choice of which input to make is classical. 
Both \cite{wilde_polar_2013} and \cite{nasser_polar_2017} only considered the case of uniformly random channel input (befitting the equilattice or quantile constellations), but this restriction can be lifted by the construction of Honda and Yamamoto~\cite{honda_polar_2013} (using modified polarization statements found in \cite{renes_efficient_2015}), so as to apply to random-walk or Gauss-Hermite constellations. 

Thus, a rate of $I(Z_m:B_m)$ is achievable. 
Both \cite{wilde_polar_2013} and \cite{nasser_polar_2017} appeal to Arikan's original encoder, hence encoding is
efficient. 
The quantum version of the successive cancellation decoder is
likewise explicit, but its implementation complexity is unknown. 

\subsection{Private coding}

For private coding, the task is not only to transmit information reliably, but
also to hide it from an eavesdropper. 
Codes to do so are constructed by Renes and Wilde in~\cite{renes_polar_2014}, 
where it is shown that the ``naive'' wiretap rate $I(Z_m : B_m) - I(Z_m : E_m)$ is achievable. 
Generally, the code requires secret-key assistance at nonzero rate, but not if the channel is degradable~\cite{wolf_quantum_2007}, e.g.\ for pure loss ($N_0=0$). 

Crucial to the construction is the observation is that the channels to the legitimate receiver Bob and to the
eavesdropper Eve are related by an entropic uncertainty relation. 
Eve's information about the actual classical message must be small if the code is constructed so that it could send complementary ``phase'' information to Bob. The construction in~\cite{renes_polar_2014} is again for uniform channel inputs, but can be extended as just described above. The uncertainty principle argument is unaffected by having non-uniform channel inputs.

\subsection{Quantum coding}

For quantum coding, we proceed as in the case of private coding, except we use
the channels to send quantum states. Now we regard the modulator as mapping the $k$th basis state of an 
$m^2$-dimensional quantum system to the $k$th coherent state in the constellation. 
Since this mapping is not unitary, the coherent information of the modulator composed with $\channel$ is lower than that of $\channel$ itself. This implies that we cannot employ the method of Devetak~\cite{devetak_private_2005} to upgrade a private code to a quantum code (at least, not at the same rate). 

Instead, we employ a scheme based on one-way entanglement distillation, combining it with teleportation to enable transmission of arbitrary states~\cite{bennett_mixed-state_1996}. 
Recall that in entanglement distillation, 
Alice and Bob use local operations and classical communication to transform many copies of
a bipartite mixed state into a number of maximally entangled states. 
The mixed state in question is that obtained by Alice transmitting the state in \eqref{eq:modulatorinput} through the channel to Bob, while keeping its purification. 
Abusing notation and denoting by $Q_{N,m}(j)$ the probability of the $j$th coherent state $\ket{z_j}$, the state at the input to the channel can be written
\begin{align}
\ket{\xi}^{A'A}=\sum_{j=1}^{m^2}\sqrt{Q_{N,m}(j)}\ket{b_j}^{A'}\ket{z_j}^A,
\end{align} 
where the $\ket{b_j}$ are an orthonormal basis for an $m^2$-dimensional space $A'$. 
Here we are interested in using stabilizer codes for entanglement distillation, 
where Alice makes stabilizer measurements on $A'$ and sends the outcomes
(the syndrome) to Bob. Bob then uses this side information to execute a decoding
operation on his system. By choosing a suitable stabilizer code, Alice and
Bob end up with copies of a maximally entangled state. As shown
in~\cite{renes_efficient_2012,renes_efficient_2015}, by using a polar code the scheme has rate equal
to $I(Z_m:B)-I(Z_m:E)$ which approaches $Q_G^{(1)}(\channel)$ as $m\to \infty$. 
The former construction is simpler, yet may require entanglement assistance, while the latter is somewhat more complicated but does not require entanglement assistance. 
In either case Alice's stabilizer measurements can be done efficiently, since the requisite quantum circuit is just the polar coding circuit. 

This scheme can be converted into a quantum code involving no classical communication as follows. 
Consider a particular syndrome, selected in advance and known to Bob (since the
final state is entangled, the distribution of syndromes is essentially uniform).
Instead of preparing many copies of the state $\ket{\xi}^{A'A}$ and making the
stabilizer measurement, Alice could just create the bipartite state that results
when the given syndrome is observed. After receiving the channel output Bob
proceeds with his decoding operation. By the properties of stabilizer codes,
each logical codeword will correspond to a superposition of tensor products of
coherent states. 
The catch in this construction is that we no longer have any guarantee that the codewords are efficiently constructable except by the tedious protocol of performing the procedure above and keeping the state only when the desired syndrome is obtained.

\section{Conclusions and Outlook}
We have constructed three classes of codes for the thermal noise channel, all based on concatenation of polar coding with suitable discretizations of the channel into a constellation of input coherent states. 
For transmitting classical information, this leads to explicit codes with efficiently implementable encoders that achieve the single letter channel capacity if restricted to Gaussian inputs. 
Moreover, the encoder need only prepare products of coherent states. 
For transmitting classical information privately, our codes naively achieve the unoptimized wiretap rate $I(Z:B)-I(Z:E)$, but this could presumably be improved by preprocessing exactly as in the classical wiretap scenario. 

In the fully quantum case, we have shown how to employ a one-way entanglement distillation scheme in order to send quantum information at a rate given by the Gaussian coherent information. Alice's operations are efficiently implementable in this scheme, but it is not clear how to efficiently generate the corresponding quantum codewords for use in a standard error-correction scenario. From a more technical perspective, this construction provides a rigorous proof taking the subtleties of the infinite-dimensional setting into account that the Gaussian coherent information is a lower bound on the quantum capacity of the thermal channel and equal to it for pure loss. 

While our results thus provide explicit and efficiently implementable encodings, the question of how to construct efficient and experimentally realizable decoders is still wide open. Although an explicit decoder is known for classical information transmission, the successive cancellation decoder of~\cite{wilde_polar_2013}, it is not known how to implement it efficiently for non-commuting channel outputs. Moreover, recent results by Winter and co-workers even suggest that, while Gaussian encodings are sufficient to achieve the capacity, Gaussian decoders seem not to be sufficient~\cite{Winter_2017}. 
This again shows that much further work is needed to understand the limitations of data transmission through Gaussian quantum channels.

\section*{Acknowledgments} 
We thank Saikat Guha and Mark M.\ Wilde for helpful discussions on the problems of constructing codes for Gaussian channels. 
This work was supported by the Swiss National Science Foundation (through the
National Center of Competence in Research ``Quantum Science and Technology'' and
Grant No. 200020-135048) and by the European Research Council (ERC Grant No.\
258932). V.B.S. also acknowledges support from the ERC (Grant No.\ 197868) and from
an ETH Postdoctoral Fellowship. F.L. also acknowledges support from the Danish
National Research Foundation and The National Natural Science Foundation of China (under
the grant 61361136003) for the Sino-Danish Center for the Theory of Interactive
Computation and from the Center for Research in Foundations of Electronic
Markets, supported by the Danish Strategic Research Council.


\appendix

\section{The relative entropy upper bound}
Equation~\ref{eq:DeltaBbound} follows from \eqref{eq:DeltaB}, the upper bound $D(\rho||\sigma)\leq \chi^2(\rho,\sigma)$, and the following 
\begin{lemma}
\label{lem:chi-P}
Let $X \sim \mathsf{N}(0,1)$ and $X'_1, X'_2$ be arbitrary real random variables with  densities $P_{X'_1}$ and $P_{X_2'}$. 
For $Z=\sqrt{\tfrac N2}(X'_1+i X'_2)$ with density $Q(z)$ and arbitrary $N>0$,
define $\rho=\int_{\mathbb C}\,\,\dd z Q(z) \, \theta_z$. 
Then, with $Y=W_s(X)$ and $Y'_j=W_s(X'_j)$, $N' = k^2 N + N_c$, and $s = k^2 N/(\sqrt{N'(N'+1)}-k^2N)$,
\begin{align*}
1+\chi^2(\rho,\tau_{N'})=(1+\chi^2(P_{Y'_1},P_Y))\,(1+\chi^2(P_{Y'_2},P_Y)).
\end{align*}
\end{lemma}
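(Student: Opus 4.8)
The plan is to compute the quantum $\chi^2$-quantity directly in phase space using the representation $1+\chi^2(\rho,\tau_{N'})=\tr[\rho\,\tau_{N'}^{-1/2}\rho\,\tau_{N'}^{-1/2}]$, and to show that it factorizes across the two quadratures. First I would write $\rho$ in terms of its Glauber-Sudarshan $P$ function. Since $\theta_z$ is a displaced thermal state of mean photon number $N_c$, it has $P$ function $\psi_{N_c}(w-kz)$, so $\rho=\int_{\mathbb C}\dd{w}\,P_\rho(w)\,\kb{w}$ with $P_\rho(w)=\int_{\mathbb C}\dd{z}\,Q(z)\,\psi_{N_c}(w-kz)$. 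The trace then becomes a double phase-space integral of $P_\rho(w_1)P_\rho(w_2)$ against the kernel $\bra{w_1}\tau_{N'}^{-1/2}\ket{w_2}\bra{w_2}\tau_{N'}^{-1/2}\ket{w_1}$. Because $\tau_{N'}$ is diagonal in the number basis, so is $\tau_{N'}^{-1/2}$, and a direct summation of its coherent-state overlaps gives the Gaussian matrix element $\bra{w_1}\tau_{N'}^{-1/2}\ket{w_2}=\sqrt{N'+1}\,\exp(-\tfrac12|w_1|^2-\tfrac12|w_2|^2+\lambda\,\overline{w_1}w_2)$ with $\lambda=\sqrt{(N'+1)/N'}$. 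The kernel is thus $(N'+1)\exp(-|w_1|^2-|w_2|^2+2\lambda\,\mathrm{Re}\,\overline{w_1}w_2)$, a Gaussian in the real coordinates $w_j=u_j+iv_j$.

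I would then exploit the product structure. The kernel factorizes across quadratures, since both $|w_1|^2+|w_2|^2$ and $\mathrm{Re}\,\overline{w_1}w_2=u_1u_2+v_1v_2$ split into a $u$-part and a $v$-part. By construction \eqref{eq:Qmdef} the distribution $Q$ factorizes as $\tfrac N2 Q(\sqrt{N/2}(x_1+ix_2))=P_{X'_1}(x_1)P_{X'_2}(x_2)$, and $\psi_{N_c}(w-kz)$ is a product of one-dimensional Gaussians in $(u,x_1)$ and $(v,x_2)$. Hence $P_\rho(u+iv)=f(u)\,g(v)$, and the whole double integral splits as a product of two one-dimensional integrals, one assembled from $P_{X'_1}$ and the other from $P_{X'_2}$, with the prefactor $N'+1=\sqrt{N'+1}\cdot\sqrt{N'+1}$ distributed one factor to each.

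It remains to identify each one-dimensional factor with $1+\chi^2(P_{Y'_j},P_Y)=\int\dd{y}\,P_{Y'_j}(y)^2/P_Y(y)$ for the AWGN $Y=W_s(X)$ and $Y'_j=W_s(X'_j)$. Performing the inner Gaussian integration on each side reduces both quantities to $\iint \dd{x_a}\dd{x_b}\,P_{X'_j}(x_a)P_{X'_j}(x_b)\,\exp[\,C(x_a^2+x_b^2)+D\,x_a x_b\,]$ times an explicit constant, so the task is to match the pair $(C,D)$ and the prefactor under a single choice of $s$. The ratio $D/C$ is scale-free; using $N'=k^2N+N_c$ and the value of $\lambda$ above it collapses to $-2(1+s)/s$ exactly when $s=k^2N/(\sqrt{N'(N'+1)}-k^2N)$, which is how the stated $s$ arises. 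The remaining equality of the diagonal coefficients then reduces, after the same substitution, to the algebraic identity $k^4N^2=N'^2-2N_cN'+N_c^2$, which is immediate from $k^2N=N'-N_c$; the prefactors can be checked to agree similarly.

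The hard part will be this last step: both inner Gaussian integrations --- over $(u_1,u_2)$ against the unbounded kernel $\exp(-|w_1|^2-|w_2|^2+2\lambda\,\mathrm{Re}\,\overline{w_1}w_2)$ on the quantum side, and over $y$ on the classical side --- require careful bookkeeping of the $\sqrt{N/2}$ rescaling and the widths $N_c$, and one must confirm that $C$, $D$, and the normalization all match for the same $s$. Along the way I would verify convergence, i.e.\ that $\chi^2(\rho,\tau_{N'})<\infty$, which requires the relevant $2\times2$ quadratic form to be positive definite (equivalently $(1+N_c)/N_c>\lambda$); this holds in the regime of interest. Stating and proving the identity for arbitrary factorizable $P$ functions, as above, gives the full generality promised for \eqref{eq:DeltaBbound}, with the four named constellations recovered as special cases.
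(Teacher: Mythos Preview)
Your proposal is correct and follows essentially the same route as the paper. The paper packages the computation into two auxiliary lemmas---one expressing $1+\chi^2(\rho,\tau_{N'})$ as $\int P(z)P(z')\,C_{N'}(z,z')$ with exactly your kernel $C_{N'}(w,w')=(N'+1)\exp[-|w|^2-|w'|^2+t_{N'}(w\bar w'+\bar w w')]$, and one expressing the classical $1+\chi^2(P_{Y'},P_Y)$ as $\int P_{X'}(x)P_{X'}(x')K_s(x,x')$---and then performs the inner Gaussian integral over $(w,w')\in\mathbb C^2$ in one shot using a complex matrix Gaussian formula before observing that the resulting kernel $R_{N'}(z,z')$ factorizes as $K_s(x,x')K_s(y,y')$ after the rescaling $z=\sqrt{N/2}(x+iy)$. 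You instead factorize across quadratures at the outset (since both $P_\rho$ and the kernel split) and carry out two real $2\times 2$ Gaussian integrals; this is the same computation reorganized, and your identification of $s$ via the scale-free ratio $D/C=-2(1+s)/s$ is exactly how the paper's substitution $s/(1+s)=c/d$ with $c=N'-N_c$, $d=\sqrt{N'(N'+1)}$ arises.
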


We prove Lemma~\ref{lem:chi-P} by establishing two intermediate results which give explicit expressions for the classical and quantum $\chi^2$ quantities. 
First, define the real Gaussian density 
\begin{align}
  \label{eq:gaussian}
  \phi_{s}(x) &:= \frac{1}{\sqrt{2 \pi s} } \exp \left[-\frac{x^2}{2
      s} \right],
\end{align}
for $x \in \mathbb R$ and $s>0$; recall that the corresponding complex density $\psi_s$ is defined in \eqref{eq:Zdistrib}. 
Observe that $\phi_s(x)=\tfrac 1{\sqrt s}\phi_1(\tfrac x {\sqrt s})$ and  $\psi_s(x+iy)=\phi_{s/2}(x)\phi_{s/2}(y)$.
We will make use of the following Gaussian integral formula. For $A$ an arbitrary $n\times n$ complex matrix with positive definite Hermitian part,
i.e., $\frac{1}{2} (A + A^{\dagger}) \geq 0$, and arbitrary $\mathbf u,\mathbf v\in \mathbb C^n$ we have~\cite[Eq.\ 3.18]{altland_condensed_2010}.
\begin{equation}
\label{eq:gaussian_integral}
\int_{\mathbb{C}^n}\!\!{\rm d}\mathbf w\,\, e^{-\bar{\mathbf w}^TA \mathbf w+\bar{\mathbf u}^T \mathbf w+\bar{\mathbf w}^T \mathbf v}=\pi^n\det A^{-1}e^{\bar{\mathbf u}^T A^{-1}\mathbf v}.
\end{equation}

Now we give the expression for the relevant classical $\chi^2$ quantity, $\chi^2(P_{Y'},P_Y)$ with an arbitrary input $X'$. 
To this end define, for $s\geq 0$ and $x,x\in \mathbb R$,
\begin{align}
\label{Ksdef}
K_s(x,x'):=
\frac{1{+}s}{\sqrt{1{+}2s}}\exp\left[-\frac{s}{2(1{+}2s)}\left(s(x{-}x')^2-2xx'\right)\right].
\end{align}
\begin{lemma}
\label{lem:classicalchi2expression}
For random variables $X$, $X'$, $Y$, and $Y'$ as in~\eqref{eq:wuverduchi}, 
\begin{align}
1+\chi^2(P_{Y'},P_Y)= \int_{\mathbb R^2}\!\!\dd x\,\,\dd{x'} P_{X'}(x)P_{X'}(x') K_s(x,x').
\end{align}
\end{lemma}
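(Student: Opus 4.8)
The plan is to compute both sides directly and match them through a single Gaussian integral. First I would recall that the classical $\chi^2$-divergence is the commuting specialization of the quantum one, namely $\chi^2(P_{Y'},P_Y)=\int_{\mathbb R}\dd y\,P_{Y'}(y)^2/P_Y(y)-1$, so that $1+\chi^2(P_{Y'},P_Y)=\int_{\mathbb R}\dd y\,P_{Y'}(y)^2/P_Y(y)$. Since $Y=W_s(X)=\sqrt s X+G$ with $X,G\sim\mathsf N(0,1)$ independent, the reference output density is the pure Gaussian $P_Y=\phi_{1+s}$. For the general input, conditioning on $X'=x$ gives $Y'\sim\mathsf N(\sqrt s\,x,1)$, whence $P_{Y'}(y)=\int_{\mathbb R}\dd x\,P_{X'}(x)\,\phi_1(y-\sqrt s\,x)$.

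Next I would expand the square. Writing $P_{Y'}(y)^2$ as a double integral over two independent copies $x,x'$ of the input and interchanging the order of integration (justified by Tonelli's theorem, as the integrand is nonnegative), I obtain
\begin{align*}
1+\chi^2(P_{Y'},P_Y)=\int_{\mathbb R^2}\dd x\,\dd{x'}\,P_{X'}(x)P_{X'}(x')\int_{\mathbb R}\dd y\,\frac{\phi_1(y-\sqrt s\,x)\,\phi_1(y-\sqrt s\,x')}{\phi_{1+s}(y)}.
\end{align*}
Comparing with the claimed identity, it then suffices to show that the inner $y$-integral equals $K_s(x,x')$ as defined in \eqref{Ksdef}.

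Finally, this inner integral is a one-dimensional Gaussian integral. Inserting the explicit Gaussians from \eqref{eq:gaussian} and collecting terms, the exponent is quadratic in $y$ with leading coefficient $-(1+2s)/\big(2(1+s)\big)$ and linear coefficient $\sqrt s\,(x+x')$, while the product of normalization constants contributes a factor $\sqrt{1+s}/\sqrt{2\pi}$. Completing the square in $y$ (a standard one-dimensional Gaussian integral) then produces the overall prefactor $(1+s)/\sqrt{1+2s}$ together with an exponent equal to $\tfrac{s}{2(1+2s)}\big(2(1+s)xx'-s(x^2+x'^2)\big)$. The last step is to verify the elementary algebraic identity $2(1+s)xx'-s(x^2+x'^2)=-\big(s(x-x')^2-2xx'\big)$, which brings the expression into exactly the form of $K_s(x,x')$. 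The only real obstacle is the bookkeeping in this Gaussian integral --- keeping the normalization factors and the quadratic and linear coefficients straight through the completion of the square --- but there is no conceptual difficulty, since the reference measure $P_Y$ is Gaussian and the convolution structure of $P_{Y'}$ makes the $y$-integral exactly solvable.
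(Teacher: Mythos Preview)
Your proof is correct and follows essentially the same route as the paper: write $P_Y=\phi_{1+s}$ and $P_{Y'}$ as the Gaussian convolution of $P_{X'}$, expand the square into a double integral over $(x,x')$, and evaluate the remaining one-dimensional Gaussian integral in $y$ to obtain $K_s(x,x')$. The only cosmetic difference is that the paper invokes the general Gaussian integral formula~\eqref{eq:gaussian_integral} for the inner integral, whereas you complete the square by hand; your explicit verification of the prefactor and exponent (including the identity $2(1+s)xx'-s(x^2+x'^2)=-\bigl(s(x-x')^2-2xx'\bigr)$) is fine.
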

\begin{proof}
First observe that $P_Y(y)=\phi_{1+s}(y)$, while 
\begin{equation}
P_{Y'}(y)=\int_{\mathbb R} \dd x\, P_{X'}(x)\phi_{1}(y-\sqrt{s}x).
\end{equation}
Computing $1+\chi^2(P_{Y'},P_Y)$, we find  
\begin{align}
 1+\chi^2(P_{Y'},P_Y)
 &=\int_{\mathbb R} \dd y \frac{P_{Y'}(y)^2}{P_Y(y)}\\
 &=\int_{\mathbb R}\dd x\!\!\int_{\mathbb R}\dd{x'} P_{X_m}(x)
P_{X'}(x') I_s(x,x')\,,
\end{align}
where 
\begin{align}
I_s(x,x')=\int_{\mathbb R}\dd y \phi_{1+s}(y)^{-1} \phi_{1}(y-\sqrt{s}x)\phi_{1}(y-\sqrt{s}x')
\end{align}
This is a simple Gaussian integral, and using \eqref{eq:gaussian_integral} we find $I_s(x,x')=K_s(x,x')$. 
\end{proof}

Next we turn to the expression for the quantum $\chi^2$ quantity. 
\begin{lemma}
\label{lem:chi2generalcalculation}
Let $\rho$ be a state with positive $P$ function (a probability density $P$). 
For any $N> 0$, 
\begin{align}
\label{eq:chi2gen}
1+\chi^2(\rho,\tau_N)=\int_{\mathbb{C}^2} \dd z\,\, \dd {z'} P(z)\, P(z')\, C_N(z,z'),
\end{align}
where, for $t_N=\sqrt{\frac{N+1}N}$, 
\begin{align*}
C_N(z,z')
&:=(N+1)\exp\left[-|z|^2-|z'|^2+t_n(z\bar z'+\bar zz')\right].
\end{align*}
\end{lemma}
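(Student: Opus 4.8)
The plan is to evaluate the right-hand side of the $\chi^2$ definition directly. Since the divergence used here satisfies $1+\chi^2(\rho,\tau_N)=\tr[(\rho\,\tau_N^{-1/2})^2]=\tr[\rho\,\tau_N^{-1/2}\rho\,\tau_N^{-1/2}]$, I would insert the $P$-function representation $\rho=\int_{\mathbb C}\dd z P(z)\kb z$ twice. By linearity and cyclicity of the trace this yields
\[
1+\chi^2(\rho,\tau_N)=\int_{\mathbb C^2}\dd z\,\,\dd{z'}\,P(z)P(z')\,\langle z\vert\tau_N^{-1/2}\vert z'\rangle\,\langle z'\vert\tau_N^{-1/2}\vert z\rangle .
\]
Because $\tau_N^{-1/2}$ is Hermitian the two matrix elements are complex conjugates, so the integrand equals $P(z)P(z')\,\bigl|\langle z\vert\tau_N^{-1/2}\vert z'\rangle\bigr|^2$. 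Comparing with \eqref{eq:chi2gen}, it then suffices to prove the pointwise identity $\bigl|\langle z\vert\tau_N^{-1/2}\vert z'\rangle\bigr|^2=C_N(z,z')$.

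The core of the argument is therefore the single coherent-state matrix element of $\tau_N^{-1/2}$. Inverting \eqref{eq:thermal-state} in the number basis gives $\tau_N^{-1/2}=\sqrt{N+1}\sum_{n=0}^\infty t_N^{n}\kb n$, since the $n$th eigenvalue $\tfrac{1}{N+1}\bigl(\tfrac{N}{N+1}\bigr)^n$ has inverse square root $\sqrt{N+1}\,t_N^{n}$. Combining this with $\ket z=e^{-|z|^2/2}\sum_n \tfrac{z^n}{\sqrt{n!}}\ket n$, the overlap collapses to a single exponential series,
\[
\langle z\vert\tau_N^{-1/2}\vert z'\rangle=\sqrt{N+1}\,e^{-(|z|^2+|z'|^2)/2}\sum_{n=0}^\infty\frac{(t_N\bar z z')^n}{n!}=\sqrt{N+1}\,e^{-(|z|^2+|z'|^2)/2}\,e^{t_N\bar z z'} .
\]
Taking the squared modulus and using that $t_N$ is real, so that $\bigl|e^{t_N\bar z z'}\bigr|^2=e^{t_N(\bar z z'+z\bar z')}$, reproduces $C_N(z,z')$ exactly, which closes the proof.

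The one point I would take care to justify, rather than the algebra, is that $\tau_N^{-1/2}$ is unbounded: its eigenvalues $\sqrt{N+1}\,t_N^{n}$ diverge, so the matrix element and the interchange of the sum with the two integrations must be shown to be legitimate. This is harmless, because the coherent-state coefficients $z^n/\sqrt{n!}$ decay super-exponentially; hence $\sum_n (t_N|z|\,|z'|)^n/n!$ converges absolutely for every finite $z,z'$, and since the integrand is nonnegative, Tonelli's theorem licenses exchanging summation and integration. Both sides of \eqref{eq:chi2gen} are then well defined in $[0,\infty]$ and the claimed equality holds verbatim; in particular it remains valid (as $+\infty$) when $\rho$ lies outside the domain of $\tau_N^{-1/2}$.
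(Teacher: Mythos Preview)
Your proof is correct and follows essentially the same route as the paper: expand $\tau_N^{-1/2}$ in the number basis and insert the $P$-representation of $\rho$ to reduce $1+\chi^2(\rho,\tau_N)$ to an explicit coherent-state kernel. The paper carries a double sum over $n,n'$ and factors it at the end, whereas you recognize the kernel as $\bigl|\langle z\vert\tau_N^{-1/2}\vert z'\rangle\bigr|^2$ from the outset and need only a single exponential series---a mild streamlining, not a different argument.
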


\begin{proof}
Computing the trace in the number basis, we get
\begin{align*}
1{+}\chi^2(\rho,\tau_{N})
&= \sum_{n=0}^\infty \bra{n}\rho\,\tau_{N}^{-1/2}\rho\, \tau_{N}^{-1/2}\ket{n}\\
&= (N{+}1)\sum_{n,n'=0}^\infty t_N^n
  t_N^{n'}\bra{n}\rho\ket{n'}\bra{n'}\rho\ket{n} \\
&=(N{+}1)\!\!\sum_{n,n'=0}^\infty \!t_N^n t_N^{n'} \!\int_{\mathbb{C}}\dd z' P(z') \braket{n'|z'}\braket{z'|n}\\
&=\int_{\mathbb{C}^2}\dd z\,\,\dd z' P(z) P(z') S_N(z,z')\,,
\end{align*}
where 
\begin{align*}
S_N(z,z')
&=(N+1)\!\!\sum_{n,n'=0}^\infty t_N^n t_N^{n'}\braket{n|z}\braket{z|n'}\braket{n'|z'}\braket{z'|n}\,.
\end{align*}
Computing $S_N(z,z')$, we find 
\begin{align*}
\frac{S_N(z,z')}{N+1}
&= \sum_{n,n'=0}^{\infty} t_N^n t_N^{n'} e^{-|z|^2}
\frac{z^{n}\bar{z}^{n'}}{\sqrt{n!} \sqrt{n'!}} e^{-|z'|^2} \frac{z'^{n'}\bar{z}'^{n}}{\sqrt{n'!} \sqrt{n!}} \\
&=  e^{-|z|^2-|z'|^2} \sum_{n=0}^{\infty} \frac{1}{n!} \left(t_N{z} \bar z' \right)^n \sum_{n'=0}^{\infty} \frac{1}{n'!}\left( t_N {z}' \bar z \right)^{n'} \\
&=\frac{C_N(z,z')}{N+1}.
\end{align*}
This completes the proof.
\end{proof}

\end{multicols}
With these two intermediate results, we are ready to establish Lemma~\ref{lem:chi-P}.
\begin{proof}[Proof of Lemma~\ref{lem:chi-P}]
Using the form of $\theta_z$ as described in Sec.~\ref{sec:bosonic-channels}, it is apparent that the $P$ function of $\rho$ is simply
\begin{align}
P_{\rho}(w)=\int_{\mathbb C}\,\dd z Q(z)\psi_{N_c}(w-kz).
\end{align}
Now applying Lemma~\ref{lem:chi2generalcalculation} with number parameter $N'$, we have
\begin{align}
1+\chi^2(\rho,\tau_{N'})
&=\int_{\mathbb{C}^2} \dd z\,\, \dd {z'} Q(z)\, Q(z') R_{N'}(z,z')\,,\label{eq:QQR}
\end{align}
where $R_{N'}(z,z')$ is just 
\begin{align*}
\int_{\mathbb{C}^2}\!\! \dd w\,\, \dd {w'} \psi_{N_c}(w{-}kz) \psi_{N_c}(w'{-}kz') C_{N'}(w,w')\,.
\end{align*}

Recalling \eqref{eq:Zdistrib} and using $A_{N} =
  \begin{pmatrix}
    1 & t_N \\
    t_N & 1
  \end{pmatrix}$ to express the argument to the exponential in $C_{N'}(w,w')$ in matrix form, we obtain a Gaussian integral for $R_{N'}(z,z')$:

\begin{equation}
  \label{eq:R_N}
  R_{N'}(z,z')=\frac{N'+1}{\pi^2
    N_c^2} e^{-\frac{k^2}{N_c}(|z|^2+|z'|^2)}\int_{\mathbb C^2} \dd{\mathbf
    w}\exp\left[-\bar{\mathbf w}^T A_{N'} \mathbf w-\frac{1}{N_c}\bar{\mathbf
      w}^T\mathbf w+\frac{k}{N_c}(\bar{\mathbf z}^T\mathbf w+\bar{\mathbf
      w}^T\mathbf z) \right].
\end{equation}

Here $\mathbf z=(z,z')$ and similarly for $\mathbf w$. 
The integrand has the form $e^{-\bar{\mathbf w}^TA \mathbf w+\bar{\mathbf u}^T
  \mathbf w+\bar{\mathbf w}^T \mathbf v}$ where $A = A_{N'} + \id/N_c$ and
$\mathbf{u} = \mathbf{v} = k/N_c
\mathbf{z}$. Applying~\eqref{eq:gaussian_integral} to~\eqref{eq:R_N}, we obtain
\begin{align}\label{eq:defRkernel}
R_{N'}(z,z')&=\frac{N'+1}{N_c^2}\det(A^{-1})\exp\left[\frac{k^2}{N_c}\,\bar{\mathbf z}^T(\frac{1}{N_c}A^{-1}-\id)\mathbf z\right]\\
&=\frac{N'(N'+1)}{N'+2N' N_c - N_c^2}\exp\left[-k^2\frac
  {(N'-N_c)(|z|^2+|z'|^2)-\sqrt{N'(N'+1)}(z\bar z'+\bar z z')}{N'+2N'
    N_c-N_c^2}\right]. \label{eq:R_Np}
\end{align}
To simplify this expression, let $c=N'-N_c$ and $d=\sqrt{N'(N'+1)}$; these are the prefactors of the first and second terms in the argument of the exponential, absent the denominator. 
Observe that the denominator itself is simply $d^2-c^2$.  
Now notice that $s$ as defined in the statement of the lemma is such that $s/(1+s)=c/d$.
By direct substitution it is easy to show that the prefactor of the exponential in \eqref{eq:R_Np} simplifies to
\begin{align}
\frac{N'(N'+1)}{N'+2N'N_c-N_c^2}=\frac{(1+s)^2}{1+2s}.
\end{align}
Therefore,
\begin{align}
R_{N'}(z,z')
&=\frac{(1+s)^2}{1+2s}\exp\left[-\frac{k^2}{c}\frac{s}{1+2s}\left(s(|z|^2+|z'|^2)-(1+s)(z\bar z'+\bar z z')\right) \right].
\end{align}
As $c=k^2N$, we obtain
\begin{align}
R_{N'}(\sqrt{\tfrac N {2}} z,\sqrt{\tfrac N {2}} z')
&=\frac{(1+s)^2}{1+2s}\exp\left[-\frac{s}{2(1+2s)}\left(s(| z|^2+| z'|^2)-(1+s)( z{\bar z}'+{\bar z}  z')\right) \right],
\end{align}
and thus, finally,
\begin{align}
R_{N'}(\sqrt{\tfrac N {2}} (x+iy),\sqrt{\tfrac N {2}} (x'+iy'))
&=K_s(x,x')K_s(y,y').
\end{align}
Returning to \eqref{eq:QQR} and changing variables $z\to \sqrt{\frac N2}(x+iy)$ yields
\begin{align}
1+\chi^2(\rho,\tau_{N'})
&=\left(\int_{\mathbb{R}} \dd x\,\, \dd {x'} P_{X_1'}(x)\, P_{X_1'}(x') K_s(x,x')\right)
\left(\int_{\mathbb{R}} \dd x\,\, \dd {x'} P_{X_2'}(x)\, P_{X_2'}(x') K_s(x,x')\right).
\end{align}
Appealing to Lemma~\ref{lem:classicalchi2expression} completes the proof. \hfill\qedhere
\end{proof}

\printbibliography[heading=bibintoc,title=References]

\end{document}